\newtheorem{lemma}{Lemma}
\newcommand{\BlackBox}{\rule{2.6mm}{2.6mm}}
\newenvironment{proof}{\noindent {\bf Proof:}~}{\hspace*{\fill}\(\BlackBox\)}
\let\l@ENGLISH\l@english
\begin{document}

\title{On the Feedback Reduction of Relay Aided Multiuser Networks using Compressive Sensing}
\author{Khalil M. Elkhalil, Mohammed E. Eltayeb,~\IEEEmembership{Student Member,~IEEE,} Abla Kammoun,~\IEEEmembership{Member,~IEEE}, Tareq~Y.~Al-Naffouri ,~\IEEEmembership{Member,~IEEE}, and Hamid~Reza~Bahrami,~\IEEEmembership{Member,~IEEE}%

\thanks{Khalil Elkhalil, Abla Kammoun and Tareq Y. Al-Naffouri are with the Electrical Engineering Department, King Abdullah University of Science and Technology, Thuwal, Saudi Arabia; e-mails: \{khalil.elkhalil,abla.kammoun, tareq.alnaffouri\}@kaust.edu.sa. Tareq Y. Al-Naffouri is also associated with the Department of Electrical Engineering, King Fahd University of Petroleum
and Minerals, Dhahran 31261, Kingdom of Saudi Arabia. 

Mohammed Eltayeb is with the department of Electrical and Computer Engineering, The University of Texas at Austin, Texas, USA ; e-mail: meltayeb@utexas.edu. 

Hamid Reza Bahrami is with the Department of Electrical and Computer Engineering, The University of Akron, Ohio, USA; e-mail: hrb@uakron.edu.
}

%\thanks{Mohammed Eltayeb and Hamid Reza Bahrami are with the Department
%of Electrical and Computer Engineering, The University of Akron, Ohio, USA; e-mails: \{me33, hrb\}@uakron.edu. Khalil Elkhalil and Tareq Y. Al-Naffouri are with the Electrical Engineering Department, King Abdullah University of Science and Technology, Thuwal, Saudi Arabia; e-mails: \{khalil.elkhalil, tareq.alnaffouri\}@kaust.edu.sa. Tareq Y. Al-Naffouri is also associated with the Department of Electrical Engineering, King Fahd University of Petroleum
%and Minerals, Dhahran 31261, Kingdom of Saudi Arabia.}% <-this % stops 

}

\maketitle

\vspace{-15mm}
\begin{abstract}
In this paper, we propose a feedback reduction scheme for full-duplex relay-aided multiuser networks. The proposed scheme permits the base station (BS) to obtain channel state information (CSI) from a subset of strong users under substantially reduced feedback overhead. More specifically, we cast the problem of user identification and CSI estimation as a block sparse signal recovery problem in compressive sensing (CS). Using existing CS block recovery algorithms, we first obtain the identity of the strong users and then estimate their CSI using the best linear unbiased estimator (BLUE). To minimize the effect of noise on the estimated CSI, we introduce a back-off strategy that optimally backs-off on the noisy estimated CSI and derive the error covariance matrix of the post-detection noise. In addition to this, we provide exact closed form expressions for the average maximum equivalent SNR at the destination user. Numerical results show that the proposed algorithm drastically reduces the feedback air-time and achieves a rate close to that obtained by scheduling schemes that require dedicated error-free feedback from all the network users.
\end{abstract}
\begin{IEEEkeywords}
Feedback, Decode-and-Forward, Full-Duplex Relaying, Compressive Sensing.
\end{IEEEkeywords}

\section{Introduction}
 In a relay aided networks, a relay station (usually low powered) is deployed to assist communication between the base station (BS) and a destination user. Relaying techniques can be classified, based on their forwarding strategy and required processing at the relay terminal, as \emph{decode and forward} (DF) or \emph{amplify and forward} (AF) \cite{rl2}. Relay transmission for both AF and DF relaying can be performed in \emph{half-duplex (HD)} (e.g. \cite{relay11a}) or \emph{full-duplex (FD)} (e.g. \cite{FDAF} ) mode. In HD mode, the BS and the relay transmit on orthogonal channels, whereas in FD mode,  the BS and relay share a common channel and the relay transmits and receives simultaneously over the same channel. Thus, half-duplex relay schemes induce a throughput loss due to the pre-log factor 1/2 \cite{six}. Full duplex relay schemes are shown to improve the network throughout by eliminating this pre-log factor \cite{FDAF}. 

In a network with large number of users, the BS can exploit the multiuser diversity to maximize the network capacity. This is achieved by requesting all users to feed back their equivalent SNR (BS-relay-user) to the base station (BS). Clearly, this is expensive in terms of spectrum utilization and results in a large feedback overhead, especially when the number of users is high \cite{jindal},\cite{ammimo}. However, channel state information (CSI) mismatch that could arise from both channel estimation errors and quantized feedback results in a loss in the achievable rate. Therefore, for practical scenarios, feedback strategies should be able to reduce the overhead while satisfying certain performance guarantees.
In fact, to reduce the feedback, it is possible to request only a few strong users to encode their SNR on non-orthogonal codewords. This creates a sparse user regime, where sparse approximation algorithms (see e.g. \cite{cP08}) are then applied to recover the identity and SNR of the strong users. The work in \cite{cs12} uses Compressed Sensing (CS) to reduce the feedback load of single-input-single-output (SISO) networks, while other work e.g. \cite{me}-\cite{cs15} apply CS to reduce the feedback overhead for MIMO networks. \\
In this paper, we apply CS to reduce the feedback overhead of a relay-aided multiuser network. We exploit the feedback nature of FD relays to pose the feedback problem as a block CS recovery problem. Block CS recovery leverages the a priori information of the signal block size to better differentiate true signal information from recovery artifacts. This leads to a more robust recovery \cite{baraniuk}. To the best of our knowledge, this is the first paper that proposes a CS-based feedback algorithm for multi-user FD relay-aided networks. We focus on networks that are comprised of a BS serving users in their respective coverage areas. It is assumed that there is no direct link between the BS and the users, and hence, communication must take place via a relay terminal. Examples of such a topology include cell-edge users, users that are shadowed from the BS, e.g. users in a shopping complex or an air-port, or users covered by a pico or a femto cell as the case in heterogeneous networks \cite{het1}\cite{het2}. To exploit the multi-user diversity, the BS requires feedback of CSI from all the users. Based on the feedback the BS receives (via the relay), it schedules the user with the best BS-relay-user channel condition. This feedback requirement poses two immediate challenges. Firstly, CSI feedback (users-relay-BS) generates a great deal of feedback overhead (air-time) that could result in significant performance hits if the number of users becomes too high. Secondly, the fed back channel information is usually corrupted by additive noise. In this paper we addresse the above challenges and provide the following contributions

\begin{itemize}
\item We propose a CS-based full-duplex feedback algorithm that reduces the feedback overhead of multi-user relay aided networks.
To the best of our knowledge, this is the first paper that introduces full-duplex feedback and accounts for the relay self-interference on the feedback link.
\item We account for the feedback noise and refine the estimated equivalent SNR, and optimally \emph{backing off}\footnote{The term back-off denotes an SNR reduction, applied by the BS to the estimated SNRs, to reduce the likelihood of an SNR over-estimation error.} on the noisy equivalent SNR estimates.
\item We derive the post-CS detection noise variance.
\item We derive the average equivalent SNR of the best user when the relay employs HD and FD feedback.
\end{itemize}

The remainder of the paper is organized as follows. In Section \ref{sysmod}, we introduce the system model. We describe the proposed feedback algorithm in Section \ref{proposed} and evaluate its performance in Section \ref{PA}. In Section \ref{nr}, we present some numerical results prior to concluding our work in Section \ref{conclusion}. 

{\bf{Notations:}} Throughout this paper, we use the following notations : $F_X\left(.\right),$ $F^{-1}_X\left(.\right)$, $p_X\left(.\right)$ and $\mathbb{E}\left(X\right)$ stand for the cumulative density function (CDF), the inverse CDF, the probability density function (PDF), and the expectation of the random variable $X$ respectively. We denote by $\mathbb{P}\left(\mathcal{A}\right)$, the probability of the event $\mathcal{A}$.  Matrices are denoted by bold capital letters, rows and columns of the matrices are referred with lower case bold letters. We use the subscript $c$ and $r$ to refer to a matrix column and row respectively, for example $\mathbf{a}_{c,i}$ and $\mathbf{a}_{r,i}$ are respectively the $i$th column and row of the matrix $\mathbf{A}$. We denote by $\mathbf{A}_{\mathcal{I}}$, the submatrix of $\mathbf{A}$ whose columns are in $\mathcal{I}$. $tr\left(.\right)$ and $\left[.\right]^t$ respectively denote the trace and the transpose operators on matrices. To refer to a matrix $\mathbf{A}$ as a positive semi-definite matrix, we use the notation $\mathbf{A}\succeq 0$, consequently $\mathbf{A} \succeq \mathbf{B}$ if and only if $\mathbf{A}-\mathbf{B}\succeq 0$. We also denote by $\Gamma\left(.\right)$, $\gamma\left(.,.\right)$ and $\Gamma\left(.,.\right)$, the Gamma function, the lower incomplete Gamma function and the upper incomplete Gamma function respectively.

\section{System Model}\label{sysmod}
In this section, we introduce the downlink and feedback (uplink) transmission models for the relay-aided multi-user network. 

\begin{figure}[t!]
        \centering
    \includegraphics[width=3.5in]{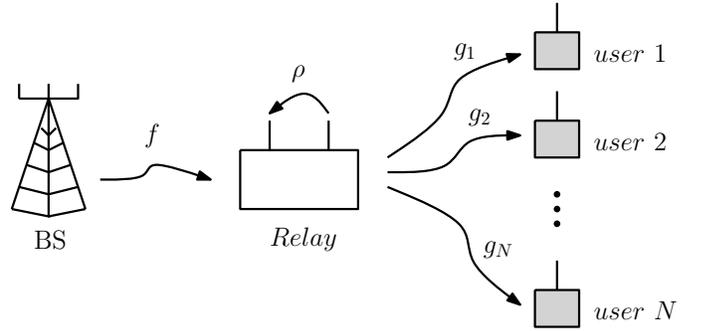}
  \caption{Network Model. }
    \label{fig:model}
  \end{figure}
  %% New system model
\subsection{Downlink Transmission Model}
We consider a network with $N$ cell-edge users served by one BS via a single FD relay (no direct path exits between the BS and destination users). As shown in Fig. \ref{fig:model}, both the BS and the users are equipped with a single antenna, whereas the relay is equipped with one transmission antenna and one receive antenna. The two relay antennas operate simultaneously over the same frequency, i.e. reception and transmission occur at the same time over the same frequency. We assume that a direct link exits between the BS and the relay, hence, the channel from the BS to the relay $f$ is modeled by a Rician distribution with fading parameter $K=\frac{b^2}{2\sigma^2}$, where $2\sigma^2$ is the average power in the Non-Line Of Sight (NLOS) multipath components and $b^2$ is the power in the LOS component \cite{andrea}. For simplicity, the BS-relay channel can be approximated to follow a Nakagami fading channel parameterized by  $P=b^2+2\sigma^2$ and fading parameter $d=\frac{\left(K+1\right)^2}{2K+1}$. Therefore,
\begin{equation}
F_{\left|f\right|^2}\left(x\right)=\frac{\gamma\left(d,\frac{ x}{\theta}\right)}{\Gamma\left(d\right)},
\end{equation}
where $\theta=\frac{P}{d}$. The channels from the relay to the $N$ users, $\left ( g_n \right )_{n=1,2,...,N}$ are assumed to be independent and identically distributed (i.i.d.) zero-mean complex Gaussian random variables with variance $\sigma^2_g$. To maximize the downlink link rate, the destination user is selected based on the following rule
\begin{equation}\label{SNR}
n^* = \arg \max_n \gamma_n,
\end{equation}
where $\gamma_n = \frac{P_r|g_n|^2}{N_0}$ is the instantaneous receive SNR at the $n$th user which is an exponential random variable with mean $\bar{\gamma}=\frac{P_r \sigma_g^2}{N_0}$, and $N_0$ is the additive noise variance. %The BS obtains the SNRs of the users via feedback which will be discussed in the following section.
 The relay may apply cancellation techniques \cite{e1},\cite{e5} to (partly) eliminate this self-interference. Usually this involves estimation of the loop channel $h$ and then subtraction of the interfering signal. However, in practice, complete interference cancellation is not achieved due to non-ideal channel estimation and signal processing. In this paper, we model the effective gain of the loop channel due to the residual interference after cancellation as a multiplicative factor $\rho$ that remains constant during the transmission interval.\footnote{This assumption can be justified by the fact that relay's antennas are fixed during transmission.}  Let $s_{n^*}(t)$ and $\hat{s}_{n^*}(t)$ represent the original and the decoded symbol by the relay respectively, then after selecting the destination user $n^*$, the BS broadcasts the symbol $s_{n^*}(t)$ which is received by the relay as
\begin{equation}\label{rr}
y_r (t) = \sqrt{P_s}f s_{n^*}(t) + \rho \sqrt{P_r}\hat{s}_{n^*}(t-1) + z_r(t),
\end{equation}
where $P_s$ and $P_r$ represent the BS and the relay transmission powers respectively, and the term $z_r$ represent the zero mean variance $N_0$ additive Gaussian noise at the relay. During the second hop, the relay will decode and forward\footnote{Throughout the paper, we only consider decode and forward as the downlink protocol.} the symbol, and the ${n^*}$th user receives
\begin{equation}\label{ru}
y_{n^*}(t) = \sqrt{P_r}g_{n^*} \hat{s}_{n^*}(t) + z_{n^*}(t),
\end{equation}
where $z_{n^*}(t)$ is the zero mean variance $N_0$ additive Gaussian noise at the $n^*$th user.
 The transmitted symbols are assumed to have the same constant normalized transmitted power over time, i.e. $\mathbb{E}[|s(t)|^2]=1$.\\
%  The relay may apply cancellation techniques \cite{e1},\cite{e5} to (partly) eliminate this self-interference. Usually this involves estimation of the loop channel $h$ and then subtraction of the interfering signal. However, in practice, complete interference cancellation is not achieved due to non-ideal channel estimation and signal processing. In this paper, we assume $\rho$ to be modeled, after preventing the line-of-sight component, as a small coefficient that remains constant during the transmission interval.\footnote{This assumption can be justified by the fact that relay's antennas are fixed during transmission.} 

\begin{figure}[t!]
        \centering
    \includegraphics[width=3.5in]{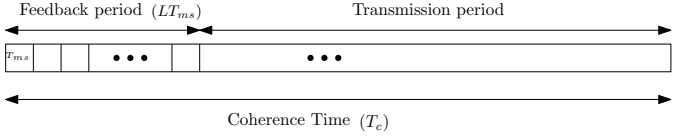}
\label{model}
\caption{Structure of a frame of duration $T_c$ composed of feedback period and transmission period.}
\end{figure}

\subsection{Feedback Model}
We assume a reciprocal TDD setup so that the BS-relay and the relay-user channels on the uplink are the same as those of the downlink. As shown in Figure 2, the feedback channel is assumed to be slotted, shared, and opportunistically accessed by the users. There are $L$ feedback mini-slots and each feedback transmission is received over a duration of one feedback mini-slot time $T_{ms}$ (see Figure 2). All channel coefficients, i.e. $f$, $h$ and $\left(g_n\right)_{n=1,2,...,N}$, are assumed to be constant during the feedback and data transmission periods, (i.e. all channels are fixed during one coherence interval of duration $T_c$). Each user is assigned an ID which is a unique signature sequence vector of dimension $M \ll N$ for use on the slotted feedback channel. The choice of $M$ is discussed in Section \ref{cs1}.  The signature sequence vectors, are drawn from the columns of a real Gaussian matrix $\mathbf{A} \in \mathbb{R}^{M\times  N}$ (with zero mean and variance $\frac{1}{M}$ i.i.d. entries), and are deterministically assigned by the BS to the users.

Prior to feedback, the relay broadcasts a pilot from which the BS and the users estimate their channels with the relay. Only users having an instantaneous receive SNR (Please refer to eq. (\ref{SNR})) higher than a predetermined threshold $\gamma_{th}$ encode (or multiply) their SNRs (relay-user SNR) with their signature sequence vectors and transmit the combination to the BS (via the relay) after applying proper uplink channel compensation, i.e. the $n$th user scales its transmission by $\frac{1}{g_n}$. The remaining users remain silent or effectively feed back a ``0''.  Let $\mathbf{x}=\left[x_1 \quad x_2 \quad ... \quad x_N\right]^t$ denotes the fed back sparse vector, where $x_n$ is the fed back value of the $n$th user and it is given by
\begin{equation}
x_n=\left\{\begin{matrix}
\gamma_n&, \gamma_n > \gamma_{th}\\
 0&, \text{otherwise.}
\end{matrix}\right.
\end{equation}
 Also, let $\mathbf{a}_{r,m}=\left[a_{m,1} \quad a_{m,2} \quad ... \quad a_{m,N}\right] \in \mathbb{R}^{1\times N}$, be the $m$th row of $\mathbf{A}$. Then at the $m$th feedback mini-slot, the $n$th user scales its feedback information by $a_{m,n}$ and feeds back the combination to the relay which simply forwards what it receives to the BS.  The relay can feed back the received measurements using FD or HD modes. In the HD case, the relay receives the following at the $m$th feedback mini-slot
\begin{eqnarray} \label{hdreceive}
y_r^{(m)}&=&\mathbf{a}_{r,m}\mathbf{x}+z_r^{(m)},
\end{eqnarray}
In the case of FD feedback, the BS will have to account for the relay self-interference which does not exist in the case of HD feedback. 
\begin{eqnarray}\label{yrm}
y_r^{(1)}&=&\mathbf{a}_{r,1}\mathbf{x}+z_r^{(1)}\\
y_r^{(m)}&=&\mathbf{a}_{r,m}\mathbf{x}+\rho y_r^{(m-1)} +z_r^{(m)}, m=2,3,...,M
\end{eqnarray}
where $y_r^{(m)}$ is the received signal at the relay at the $m$th feedback mini-slot, $\rho y_r^{(m-1)}$ is the self-interference residual at time $m$, and $z_r^{(m)}$ is the relay noise component at the $m$th mini-slot. For the sake of simplicity, we assume that $\rho$ is small enough such that $\left|\rho\right|^k \approx 0, \forall k\geq J$, where $J$ is a constant, so that all factors multiplied by $\left|\rho\right|^k$ will be ignored for $k \geq J$. With this in mind, we can transform the recursive expression in (\ref{yrm}) and (7) to the following form
\begin{equation}\label{yrm2}
\begin{split}
y_r^{(m)}&=\sum_{k=0}^{\min\left(J-1,m-1\right)}\rho^k\mathbf{a}_{r,m-k}\mathbf{x} \\
&+\sum_{k=0}^{\min\left(\lceil\frac{J}{2}\rceil-1,m-1\right)} \rho^k z_r^{(m-k)}.
\end{split}
\end{equation}
The term $\frac{J}{2}$ in the second term of (\ref{yrm2}) is due to the assumption of the convergence of the noise variance to zero for values of $k \geq \lceil\frac{J}{2}\rceil$, i.e. $\mathbb{E}[|\rho^{\frac{J}{2}} z_r|^2] \approx 0$. After receiving the $m$th feedback measurement, the relay forwards that signal to the BS. After equalizing by the BS channel, the received $m$th measurement at the BS becomes
\begin{equation}
\label{ys}
\begin{split}
y_s^{(m)}&=\sum_{k=0}^{\min\left(J-1,m-1\right)}\rho^k\mathbf{a}_{r,m-k}\mathbf{x} \\
&+\sum_{k=0}^{\min\left(\lceil\frac{J}{2}\rceil-1,m-1\right)} \rho^k z_r^{(m-k)}+\frac{w^{(m)}}{f},
\end{split}
\end{equation}
where $w^{(m)}$ is the noise term at the BS at the $m$th reception instant.

%\begin{equation}
%y_r^{(m)} = \begin{cases}
% \mathbf{a_1}\mathbf{x}+n_1, & m=1 \\
%\mathbf{a_2}\mathbf{x}+\rho \mathbf{a_1}\mathbf{x}+n_2+\rho n_1, & m=2 \\
%\mathbf{a_m}\mathbf{x}+\rho \mathbf{a_{m-1}}\mathbf{x}+\rho^2 \mathbf{a_{m-2}}\mathbf{x} \\
%\ +n_m+\rho n_{m-1}, & m=3,4,...,M
%\end{cases}
%\end{equation}
%
%We denote by $y_s^{(m)}$ ($m=1,2,...,M$), the source received fedback signal at the $m$th time slot. After equalizing by the source channel,  $y_s^{(m)}$ is given by

%\begin{equation} \label{ys}
%y_s^{(m)} = \begin{cases}
% \mathbf{a_1}\mathbf{x}+\frac{w_1}{f}+n_1, & m=1 \\
%\mathbf{a_2}\mathbf{x}+\rho \mathbf{a_1}\mathbf{x}+\frac{w_2}{f}+n_2+\rho n_1, & m=2 \\
%\mathbf{a_m}\mathbf{x}+\rho \mathbf{a_{m-1}}\mathbf{x}+\rho^2 \mathbf{a_{m-2}}\mathbf{x}\\
%\ +\frac{w_m}{f}+n_m+\rho n_{m-1}, & m=3,4,...,M
%\end{cases}
%\end{equation}

%{\color{red}
%
%REGULAR and BLOCK-CS SUMMARY SHOULD BE HERE. Please Summarize concept of block CS from "Model-Based Compressive Sensing" paper. NO NEED TO TALK ABOUT RIP CONDITIONS.
%}
% HD Feedback
\section{Opportunistic User Selection and Feedback SNR Estimation}
\label{proposed}
In this section, we present the proposed CS-based feedback algorithm and outline the steps required to obtain the identity and the receive SNR of the strong users at the BS. We first demonstrate our approach in the simpler case of HD relaying and then extend it to the case of FD relaying.

\subsection{Feedback Threshold}
To apply CS theory, only a few strong users are allowed to feedback. This creates a sparse user regime. This is done by using a feedback threshold \footnote{Note that the threshold is the same for both relaying techniques HD and FD.}. The feedback threshold $\gamma_{\text{th}}$ is optimized to meet a target scheduling outage probability $\mathcal{P}_o$. The outage probability is defined as the probability that all users fail to report an SNR above $\gamma_{\text{th}}$. The scheduling outage probability can be calculated as
\begin{eqnarray}\label{poa}
\begin{split}
\mathcal{P}_o &= \mathbb{P}(\gamma_n < \gamma_{\text{th}}, \text{ for all } n=1,...,N)  \\
&= [F_{\gamma}(\gamma_{\text{th}})]^N 
\end{split}
\end{eqnarray}
where $S$ is the number of users that report an SNR above the threshold and  $\bar{S}=\mathbb{E}\left[S\right]$ is its mean value , $F_{\gamma}(\gamma_{\text{th}}) = 1-\exp({\frac{-\gamma_{\text{th}}}{\bar{\gamma}}})$ is the SNR CDF at the users side.  From (\ref{poa}), the feedback threshold can be calculated as
\begin{eqnarray} \label{zeta}
\gamma_{\text{th}} = F_{\gamma}^{-1}(\mathcal{P}_o^{1/N}),
\end{eqnarray}
where $\mathcal{P}_o$ is assumed to be very small. 

\subsection{User ID Estimation}
For the BS to make a user selection decision, it has to estimate the fed back vector $\mathbf{x}$. In what follows, we show that we can reliably estimate $\mathbf{x}$ and hence identify a set of strong users. More precisely, we show that in both feedback scenarios (HD and FD) recovering the fed back vector simplifies to solving an undertermined linear system.
\subsubsection{HD feedback}
In the case of HD feedback, the feedback is communicated over orthogonal channels through the two-hop network via the relay. In other words, the relay listens to the users' feedback in one time slot and forwards what it gets in an independent time slot. 
%It is not difficult to see that the received signal in the HD case is obtained by replacing $\rho$ by zero in (\ref{ys}) (non-interference case. 
Thus from (\ref{hdreceive}), we have the following linear system : 
\begin{equation}\label{hdlinear}
\mathbf{y_s}=\mathbf{A}\mathbf{x}+\mathbf{\hat{z}},
\end{equation}
where $\mathbf{y_s}=\left[y_s^{(1)} \quad y_s^{(2)}  \quad ... \quad y_s^{(M)}\right]^t$, and $\mathbf{\hat{z}}=\left[\frac{w^{(1)}}{f}+z_r^{(1)} \quad \frac{w^{(2)}}{f}+z_r^{(2)}  \quad ... \quad \frac{w^{(M)}}{f}+z_r^{(M)}\right]^t$. Obviously, the noise vector is uncorrelated with covariance matrix $\mathbf{\Sigma}_{\mathbf{\hat{z}}}=\mathbb{E}\left[\mathbf{\hat{z}}\mathbf{\hat{z}}^t\right]$. To be able to determine $\mathbf{\Sigma}_{\mathbf{\hat{z}}}$, we need to evaluate $\mathbb{E}\left[\frac{1}{\left|f\right|^2}\right]$. We have \footnote{Note that for $K > 0$, $d=\frac{\left(K+1\right)^2}{2K+1}=1+\frac{K^2}{2K+1}$, so that $d > 1$.}
\begin{equation}
\begin{split}
\mathbb{E}\left[\frac{1}{\left|f\right|^2}\right]&=\int_0^{\infty}\frac{x^{d-2}}{\theta^d\Gamma(d)}e^{-\frac{x}{\theta}}dx \\
&=\frac{1}{\left(d-1\right)\theta}.
\end{split}
\end{equation}
Thus, $\mathbf{\Sigma}_{\mathbf{\hat{z}}}=\left(N_0+\frac{N_0}{\left(d-1\right)\theta}\right)\mathbf{I}_M$.
With this in mind, we can recover the support of $\mathbf{x}$ using regular compressive sensing recovery algorithms such as the least absolute shrinkage and selection operator (LASSO) outlined in \cite{cP08}.
\subsubsection{FD feedback}
In this case, we make use of the structure in the received signal in (\ref{ys}) and then applies the theory of Compressive Sensing to recover $\mathbf{x}$. Interestingly, the problem can be formulated as a block sparse recovery problem where the unknown vector is a block sparse vector with block size $J$. We denote by $\mathbf{x}_{(n)}=\left[x_n,\rho x_n,...,\rho^{J-1}x_n\right]^t$, $n=1,2,...,
N$, and by $\mathbf{a}^{j}_{c,n}=\left[ \underbrace{0,0,...,0}_{j},\mathbf{a}_{c,n}\left(1:M-j \right )^t \right ]^t$.

Then, based on (\ref{ys}), we have the following linear system

%\footnote{Our analysis is applicable for any value of $J$, but for presentation purposes, we restrict the analysis to $J=3$.}
\begin{equation} \label{linear1}
\mathbf{y_s}=\mathbf{B}\mathbf{v}+\mathbf{\hat{z}} ,
\end{equation}
%\begin{align}
%\label{blockmatrix}
%\tiny
%\mathbf{B}=\begin{bmatrix}
%a_{1,1} &0& 0& \cdots  & a_{1,N}  &0&0  \\
%a_{2,1} &a_{1,1} &0   & \cdots  & a_{2,N}  &a_{1,N}&0  \\
%a_{3,1} &  a_{2,1} &a_{1,1}  & \cdots &a_{3,N} & a_{2,N} & a_{1,N} \\
%\vdots  & \vdots   &\vdots   & \vdots  & \vdots& \vdots & \vdots\\
% a_{M,1}&a_{M-1,1} &a_{M-2,1}    &\cdots   &a_{M,N}  &a_{M-1,N}& a_{M-2,N}
%\end{bmatrix}
%\end{align} 
where 
$\mathbf{B}=\left[\mathbf{B}_{(1)},\mathbf{B}_{(2)},...,\mathbf{B}_{(N)}\right]$, $\mathbf{v}=\left[\mathbf{x}_{(1)}^t,\mathbf{x}_{(2)}^t,...,\mathbf{x}_{(N)}^t\right]^t$ and
$\mathbf{B}_{(n)}=\left[\mathbf{a}^{0}_{c,n},\mathbf{a}^{1}_{c,n},...,\mathbf{a}^{J-1}_{c,n}\right]$, $n=1,2,...,N$
 and $\mathbf{\hat{z}}=\left[\hat{z}_1 \quad \hat{z}_2  \quad ... \quad \hat{z}_M\right]^t$ the noise vector at the BS, where
\begin{equation}
\hat{z}_m = \begin{cases}
\frac{w^{(1)}}{f}+z_r^{(1)}, & m=1 \\
\frac{w^{(m)}}{f}+z_r^{(m)}+\rho z_r^{(m-1)}, & m=2,3,...,M. \\
\end{cases}
\end{equation}
It is not difficult to see that the noise at the BS is correlated with correlation matrix $\mathbf{\Sigma}_{\mathbf{\hat{z}}}=\mathbb{E}\left[\mathbf{\hat{z}}\mathbf{\hat{z}}^t\right]$. In the case where $J=3$, we have \footnote{Our analysis is applicable for any value of $J$, but for presentation purposes, we restrict the analysis to $J=3$.}
%{\color{red} HOW DID YOU OBTAIN THIS ($\mathbb{E}\left[\frac{1}{\left|f\right|^2}\right]$). WE NEED TO EITHER PROOF THIS OR FURTHER ELABORATE.}

%Therefore,\footnote{Note that, the structure of $\mathbf{\Sigma}_{\mathbf{\hat{z}}}$ directly depends on the block size $J$. In the case, where $J=3$, $\mathbf{\Sigma}_{\mathbf{\hat{z}}}$ is tridiagonal as shown in (\ref{tri}).}
\begin{equation}
\label{tri}
\mathbf{\Sigma}_{\mathbf{\hat{z}}}=\begin{bmatrix}
\alpha_1 &\alpha_3 &0&\cdots&0 \\
 \alpha_3& \alpha_2 &\alpha_3 &\cdots& 0 \\
0 &\alpha_3 &\ddots &\ddots   &\vdots  \\
\vdots &\cdots &\ddots &\ddots &\alpha_3 \\
0&\cdots  &\cdots &\alpha_3&\alpha_2
\end{bmatrix},
\end{equation}
where $\alpha_1=N_0\left(1+\frac{1}{(d-1)\theta} \right )$, $\alpha_2=N_0\left(1+\rho^2+\frac{1}{(d-1)\theta}\right )$ and $\alpha_3=\rho N_0$. \\

As shown in (\ref{linear1}), the unknown vector $\mathbf{v}$ is block sparse with block size $J$ and sparsity $S$. Therefore, it is possible to apply the results in \cite{baraniuk} to reliably recover $\mathbf{v}$. As shown in \cite{baraniuk}, for reliable CS recovery, the number of measurements should scales as  $M=C\left(JS+S\log JN/S\right)$. Basically, the BS performs CS recovery (e.g. the CoSaMP algorithm in \cite{baraniuk}) on the resultant measurements vector $\mathbf{y}_s$ and estimates the ID of the strongest user. If at least one user is detected, the BS performs a subsequent SNR estimation and refinements to be discussed below, otherwise an outage is declared.
In the following subsection, we summarize an important result from CS theory for recovering the sparsity pattern of a block-sparse vector in a noisy setting.
\subsection{Block-Sparse Signal Recovery}\label{cs1}
The theory of compressed sensing permits efficient acquisition and reconstruction of a sparse signal (through multiplication by an appropriate random matrix) from only few measurements. For a sparse vector with size $N$ and sparsity $S$, it has been shown that robust signal recovery is possible from $M=\mathcal{O}\left(S\log N/S\right)$ \cite{spr}. One would expect that additional structure imposed on the unknown sparse signal would substantially reduce $M$ without sacrificing the recovery performance. One imposed structure could be the block-sparsity structure, where the locations of the significant entries in the sparse signal cluster in blocks. In the literature, the concept of recovering a block-sparse vector with reduced number of CS measurements has been studied in \cite{8} and \cite{43}. However robustness guarantees are restricted to either conventional CS sparse signals or recovery with noiseless measurements and do not have exact bounds for the required number of CS measurements. Recently in \cite{baraniuk}, Baraniuk \emph{et al.} showed that robustness guarantees can be achieved with $M=C\left(JS+S\log JN/S\right)$ CS measurements, where $C$ is a positive constant and $J$ is the cluster or the block size, which is a substantial improvement over $M=C\left(JS\log N/S\right)$ that would be required by conventional CS recovery algorithms. For more analytical details about block-spare recovery, readers are referred to \cite{8}-\cite{baraniuk} and references therein.
\subsection{User SNR Estimation}
\subsubsection{HD feedback}
After CS recovery, the BS obtains information on the location of the non-zeros in $\mathbf{x}$ or equivalently the support of $\mathbf{x}$ denoted by $\mathcal{S}$, where $\left|\mathcal{S}\right|=S$ is the cardinality of $\mathcal{S}$. Therefore, the linear system in (\ref{hdlinear}) can be rewritten as 
\begin{equation}\label{hd}
\mathbf{y_s}=\mathbf{A}_{\mathcal{S}}\mathbf{x}_{\mathcal{S}}+\mathbf{\hat{z}}
\end{equation}
We can now apply least squares (LS) to estimate the entries of $\mathbf{x}_{\mathcal{S}}$ as follows 
\begin{equation}\label{ls}
\begin{split}
\hat{\mathbf{x}}_{\mathcal{S}}&=\left(\mathbf{A}_{\mathcal{S}}^t\mathbf{A}_{\mathcal{S}}\right)^{-1}\mathbf{A}_{\mathcal{S}}^t\mathbf{y_s} \\
&=\mathbf{x}_{\mathcal{S}}+\mathbf{e}_
{LS},
\end{split}
\end{equation}
where $\mathbf{e}_{LS}$ is the estimation error vector. Upon conditioning on $\mathbf{A}_{\mathcal{S}}$, vector $\mathbf{e}_{LS}$ is Gaussian since it results from a linear operation on Gaussian random variables. 
The covariance of  $\mathbf{e}_{LS}$ is given by $\boldsymbol{}\left[\left(\mathbf{A}_{\mathcal{S}}^t\mathbf{A}_{\mathcal{S}}\right)^{-1}\right]$. As $M$ grow to infinity while $S$ is fixed, $\left(\mathbf{A}_{\mathcal{S}}^t\mathbf{A}_{\mathcal{S}}\right)^{-1}$ becomes close to its mean $\frac{M}{M-S-1}N_0{\bf I}_S$ \cite{icassp} in the sense that :
 %From \cite{icassp}, $\sigma_{e,LS}^2=\frac{tr\left(\mathbb{E}\left[\mathbf{e}_{LS}\mathbf{e}_{LS}^t\right]\right)}{S}$ is given by
\begin{equation}\label{sighd}
\left(\mathbf{A}_{\mathcal{S}}^t\mathbf{A}_{\mathcal{S}}\right)^{-1}-\frac{M}{M-S-1}N_0{\bf I}_S\xrightarrow[M\to+\infty]{a.s} 0
\end{equation}
The convergence holds in the almost sure sense and shows that the variance of the error is almost the same for all vector entries. In light of this result, we will consider for simplicity that ${\bf e}_{LS}$ is Gaussian vector whose entries are indendenent Gaussian variables with zero-mean and variance $\sigma_{e,LS}^2=\frac{M}{M-S-1}N_0$. 
\subsubsection{FD feedback}
Similarly, in the case of FD feedback, the BS recovers the support of $\mathbf{v}$ denoted by $\mathcal{I}$, where $\left|\mathcal{I}\right|=JS$. Therefore, the linear system in (\ref{linear1}) can be rewritten as
\begin{equation}\label{eq}
\mathbf{y}_s=\mathbf{B}_{\mathcal{I}}\mathbf{v}_{\mathcal{I}}+\mathbf{\hat{z}},
\end{equation}
 Since the covariance matrix $\mathbf{\Sigma}_{\mathbf{\hat{z}}}$ is correlated, one can apply the Best Linear Unbiased Estimator (BLUE) \footnote{When the noise is white as in eq. (\ref{hd}), LS and BLUE are equivalent.} to estimate the entries of $\mathbf{v}_{\mathcal{I}}$ as follows \cite{Sayed}
\begin{equation} \label{blue}
\begin{split}
\hat{\mathbf{v}}_{\mathcal{I}}&=\left(\mathbf{B}_{\mathcal{I}}^t\mathbf{\Sigma}_{\mathbf{\hat{z}}}^{-1}\mathbf{B}_{\mathcal{I}}\right)^{-1}\mathbf{B}_{\mathcal{I}}^t\mathbf{\Sigma}_{\mathbf{\hat{z}}}^{-1} \mathbf{y}_s \\
&=\mathbf{v}_{\mathcal{I}}+\mathbf{e}_{BLUE},
\end{split}
\end{equation}
where $\mathbf{e}_{BLUE}$ is the BLUE estimation error. Similarly to the HD feedback scenario, upon conditioning on  ${\bf B}_{\mathcal{I}}$, vector $\mathbf{e}_{BLUE}$ is Gaussian with zero-mean and covariance $\left({\bf B}_{\mathcal{I}}^{t}\mathbf{\Sigma}_{\mathbf{\hat{z}}}^{-1}{\bf B}_{\mathcal{I}}\right)^{-1}$.
When $M$ tends to infinity with $S$ and $J$ fixed, we can prove that:
%The entries of $\mathbf{e}_{BLUE}$ are also Gaussian random variables with variance $\sigma_{e,BLUE}^2$ since $\mathbf{e}_{BLUE}$ results form a linear operation on Gaussian random variables.\\
%In the following lemma, we derive a tight lower bound on $\sigma_{e,_{BLUE}}^2$ for the general case of $J$.
\begin{lemma}
Assume that $M\to+\infty$ with $S$ and $J$ fixed. Assume that $\lim\sup_{M}\left\|\mathbf{\Sigma}_{\mathbf{\hat{z}}}^{-1}\right\|<+\infty$. Then,
%When $M/\bar{S} \gg 1$, 
\begin{equation}
\left({\bf B}_{\mathcal{I}}^{t}\mathbf{\Sigma}_{\mathbf{\hat{z}}}^{-1}{\bf B}_{\mathcal{I}}\right)^{-1}-\frac{M{\bf I}_{JS}}{tr\left(\mathbf{\Sigma}_{\mathbf{\hat{z}}}^{-1}\right)} \xrightarrow[M\to+\infty]{a.s.}0.
%\sigma_{e,BLUE}^2 \gtrsim M \sum_{j=1}^J \left(tr\left(\mathbf{\Sigma}_{\mathbf{\hat{z}}}^{-1}\right)-\sum_{k=1}^{j-1}\left[\mathbf{\Sigma}_{\mathbf{\hat{z}}}^{-1}\right]_{k,k}\right)^{-1}
\end{equation}
\end{lemma}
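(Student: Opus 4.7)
The plan is to first show the pre-inversion statement
$\mathbf{B}_{\mathcal{I}}^{t}\mathbf{\Sigma}_{\mathbf{\hat{z}}}^{-1}\mathbf{B}_{\mathcal{I}}\to\tfrac{\mathrm{tr}(\mathbf{\Sigma}_{\mathbf{\hat{z}}}^{-1})}{M}\mathbf{I}_{JS}$
almost surely, and then pass through the matrix inverse by continuity. The transfer is legitimate because the limit has smallest eigenvalue $\tfrac{\mathrm{tr}(\mathbf{\Sigma}_{\mathbf{\hat{z}}}^{-1})}{M}\ge 1/\|\mathbf{\Sigma}_{\mathbf{\hat{z}}}\|$, uniformly bounded away from $0$ ($\mathbf{\Sigma}_{\mathbf{\hat{z}}}$ is tridiagonal with bounded entries, so $\|\mathbf{\Sigma}_{\mathbf{\hat{z}}}\|=O(1)$), and matrix inversion is Lipschitz in a neighbourhood of such a limit.

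For the entry-wise analysis I would index the columns of $\mathbf{B}_{\mathcal{I}}$ by pairs $(s,j)$, with $s$ one of the $S$ active users and $j\in\{0,\dots,J-1\}$ the shift, and write $\mathbf{a}_{c,n}^{j}=\mathbf{E}_{j}\mathbf{a}_{c,n}$ with $\mathbf{E}_{j}$ the deterministic $M\times M$ shift matrix. Then the $((s_1,j_1),(s_2,j_2))$-entry equals $\mathbf{a}_{c,s_1}^{t}\mathbf{Q}_{j_1,j_2}\mathbf{a}_{c,s_2}$, where $\mathbf{Q}_{j_1,j_2}:=\mathbf{E}_{j_1}^{t}\mathbf{\Sigma}_{\mathbf{\hat{z}}}^{-1}\mathbf{E}_{j_2}$ is deterministic and of operator norm at most $\|\mathbf{\Sigma}_{\mathbf{\hat{z}}}^{-1}\|$. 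The workhorse is the almost-sure concentration of Gaussian (bi)linear forms (Hanson--Wright combined with Borel--Cantelli): for independent $\mathbf{g}_1,\mathbf{g}_2\sim\mathcal{N}(\mathbf{0},\tfrac{1}{M}\mathbf{I}_M)$ and bounded-norm deterministic $\mathbf{Q}$, $\mathbf{g}_1^{t}\mathbf{Q}\mathbf{g}_1-\tfrac{1}{M}\mathrm{tr}(\mathbf{Q})\to 0$ and $\mathbf{g}_1^{t}\mathbf{Q}\mathbf{g}_2\to 0$ almost surely; a union bound over the $(JS)^2$ entries is harmless since $JS$ is fixed.

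Three cases then arise: (i) $s_1\neq s_2$, where independence of the two underlying Gaussian columns makes the bilinear form vanish a.s.; (ii) $s_1=s_2$, $j_1=j_2=j$, where the quadratic form concentrates on $\tfrac{1}{M}\mathrm{tr}(\mathbf{Q}_{j,j})=\tfrac{1}{M}\sum_{i=j+1}^{M}[\mathbf{\Sigma}_{\mathbf{\hat{z}}}^{-1}]_{i,i}=\tfrac{\mathrm{tr}(\mathbf{\Sigma}_{\mathbf{\hat{z}}}^{-1})}{M}+O(1/M)$, the trim of $j<J$ diagonal entries of a bounded-norm matrix costing at most $J\|\mathbf{\Sigma}_{\mathbf{\hat{z}}}^{-1}\|/M$; (iii) $s_1=s_2$, $j_1\neq j_2$, where the quadratic form concentrates on $\tfrac{1}{M}\sum_{i}[\mathbf{\Sigma}_{\mathbf{\hat{z}}}^{-1}]_{i+j_1,i+j_2}$, a normalised sum along a fixed off-diagonal of $\mathbf{\Sigma}_{\mathbf{\hat{z}}}^{-1}$.

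Case (iii) is the main obstacle and the crux of the argument: the conclusion of the lemma requires each such normalised shifted-diagonal sum to be negligible in the limit. I would invoke the Demko--Moss--Smith geometric-decay bound $|[\mathbf{\Sigma}_{\mathbf{\hat{z}}}^{-1}]_{i,k}|\le Cr^{|i-k|}$ for some $r\in(0,1)$, valid for inverses of banded SPD matrices (here $\mathbf{\Sigma}_{\mathbf{\hat{z}}}$ has bandwidth $1$ and bounded inverse norm by hypothesis), and combine it with the near-Toeplitz boundary structure of $\mathbf{\Sigma}_{\mathbf{\hat{z}}}$ to show that these cross-shift contributions are subleading relative to the diagonal term and fold into the a.s.\ slack. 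Once case (iii) is disposed of, combining (i)--(iii) yields the entry-wise a.s.\ convergence of $\mathbf{B}_{\mathcal{I}}^{t}\mathbf{\Sigma}_{\mathbf{\hat{z}}}^{-1}\mathbf{B}_{\mathcal{I}}$ to $\tfrac{\mathrm{tr}(\mathbf{\Sigma}_{\mathbf{\hat{z}}}^{-1})}{M}\mathbf{I}_{JS}$, and continuity of inversion at this positive scalar multiple of the identity closes the argument.
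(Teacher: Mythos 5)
Your overall route is the same as the paper's: analyse the $JS\times JS$ matrix $\mathbf{H}=\mathbf{B}_{\mathcal{I}}^{t}\mathbf{\Sigma}_{\mathbf{\hat{z}}}^{-1}\mathbf{B}_{\mathcal{I}}$ entry by entry using almost-sure concentration of Gaussian quadratic and bilinear forms (the paper invokes the trace lemmas of Wagner \emph{et al.} where you invoke Hanson--Wright plus Borel--Cantelli; the two are interchangeable here since $JS$ is fixed), then pass to the inverse by continuity, the limit being bounded away from singularity because $\frac{1}{M}tr\left(\mathbf{\Sigma}_{\mathbf{\hat{z}}}^{-1}\right)\geq 1/\left\|\mathbf{\Sigma}_{\mathbf{\hat{z}}}\right\|$. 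Your cases (i) and (ii) reproduce exactly the paper's treatment of the distinct-user entries and of the diagonal entries $d_{j,n}=\frac{1}{M}\bigl(tr\left(\mathbf{\Sigma}_{\mathbf{\hat{z}}}^{-1}\right)-\sum_{k=1}^{j}\left[\mathbf{\Sigma}_{\mathbf{\hat{z}}}^{-1}\right]_{k,k}\bigr)$, including the $O\left(J\left\|\mathbf{\Sigma}_{\mathbf{\hat{z}}}^{-1}\right\|/M\right)$ trimming estimate.

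The genuine gap is your case (iii), and the repair you propose does not close it. The Demko--Moss--Smith bound $\left|\left[\mathbf{\Sigma}_{\mathbf{\hat{z}}}^{-1}\right]_{i,k}\right|\leq Cr^{|i-k|}$ controls each entry of the $(j_2-j_1)$-th off-diagonal by the constant $Cr^{|j_1-j_2|}$; summing $M$ such entries and dividing by $M$ yields only the bound $Cr^{|j_1-j_2|}$, which is $O(1)$, not $o(1)$. In fact these normalised shifted-diagonal sums do \emph{not} vanish: since $\mathbf{\Sigma}_{\mathbf{\hat{z}}}$ is asymptotically the Toeplitz matrix with symbol $f(\omega)=\alpha_2+2\alpha_3\cos\omega$, Szeg\H{o}-type asymptotics give $\frac{1}{M}\sum_{i}\left[\mathbf{\Sigma}_{\mathbf{\hat{z}}}^{-1}\right]_{i,i+k}\to c_k$, the $k$-th Fourier coefficient of $1/f$, which is nonzero and of order $\rho^{|k|}$ whenever $\alpha_3=\rho N_0\neq 0$. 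Hence the same-user/cross-shift entries of $\mathbf{H}$ concentrate on nonzero constants, and $\mathbf{H}$ converges to a block-diagonal matrix with $J\times J$ Toeplitz blocks $[c_{j-l}]$ rather than to $\frac{1}{M}tr\left(\mathbf{\Sigma}_{\mathbf{\hat{z}}}^{-1}\right)\mathbf{I}_{JS}$. You have in fact put your finger on a step the paper itself glosses over: its proof asserts that \emph{all} off-diagonal entries of $\mathbf{H}$, including those with $n=k$ and $j\neq l$, converge to zero, whereas the cited quadratic-form lemmas justify this only for distinct users; for a common user they give concentration onto the (nonzero) normalised off-diagonal trace. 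As written, the stated convergence holds exactly only when $\rho=0$ and otherwise carries an $O(\rho)$ residual that does not disappear as $M\to\infty$; neither your argument nor the paper's disposes of case (iii).
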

\begin{proof}
See Appendix A for proof.
%\footnote{This result and the result obtained in {Lemma 1} are valid for any correlation matrix $\mathbf{\Sigma}_z$.}
\end{proof}\\
Again, we note that the above convergence suggests considering  the noise at the output of the BLUE estimator as a Gaussian random vector whose entries are independent with zero-mean and variance $\sigma_{e,BLUE}^2$ given by:
%Again,  we note that  ${\bf e}_{BLUE}$ behaves as a Gaussian vector whose entries are zero mean with variance $\sigma_{e,BLUE}^2$ given by:
\begin{equation}\label{sigfd}
\sigma_{e,BLUE}^2 = \frac{M}{tr\left(\mathbf{\Sigma}_{\mathbf{\hat{z}}}^{-1}\right) } %\sum_{j=1}^J \left(tr\left(\mathbf{\Sigma}_{\mathbf{\hat{z}}}^{-1}\right)-\sum_{k=1}^{j-1}\left[\mathbf{\Sigma}_{\mathbf{\hat{z}}}^{-1}\right]_{k,k}\right)^{-1}.
\end{equation}

\subsection{SNR Back-Off}
As stated in (\ref{ls}) and (\ref{blue}), the LS and BLUE estimators are noisy which means that the estimated SNR can be higher or lower than the actual one. This is problematic, since an estimated SNR higher than the actual one results in a transmission rate higher than the maximum rate the end user can support. To deal with this, we back-off on the estimated noisy SNRs. From (\ref{ls}) and (\ref{blue}), each entry of $\hat{\mathbf{x}}_{\mathcal{S}}$ and $\hat{\mathbf{v}}_{\mathcal{I}}$ respectively can be represented in a scalar equation as $\hat{\gamma}=\gamma+e$, where $\gamma$ and $\hat{\gamma}$ stands for the actual and the estimated SNRs respectively, and $e$ is the Gaussian error. Our back-off strategy simply subtracts an amount $\Delta$ from
$\gamma$. Hence, $\hat{\gamma}$ becomes
\begin{equation}
\hat{\gamma}=\gamma+e-\Delta
\end{equation}
To characterize the outage when the estimated SNR is higher than the actual one, we define $\eta$ as the back-off efficiency which is simply given by
\begin{equation}
\label{eta}
\begin{split}
\eta=&\mathbb{P}\left(\hat{\gamma} \leq \gamma\right) \\
=& \mathbb{P}\left(e \leq \Delta\right) \\
=& 1-Q\left(\frac{\Delta}{\sigma_e}\right),
\end{split}
\end{equation}
where $\sigma_e$ is either the LS or the BLUE error variance. Obviously, the performance of the proposed algorithm will depend on $\Delta$. A low value of $\Delta$ may result in an outage while a high value will result in a low rate. The optimal value will be derived in the following section.

\section{Performance Analysis} \label{PA}

The performance of the proposed feedback algorithm will be evaluated based on three performance criteria: i) the feedback load, ii) the achievable rate, and iii) the achievable throughput. Obviously, each of these performance criteria will depend on the feedback relaying mode, i.e. HD or FD feedback. HD feedback results in lower noise variance at the BS at the expense of a larger feedback overhead, while FD feedback results in a higher noise variance at the BS (due to the relay self-interference) but results in a lower feedback overhead.

%\footnote{Here, we assume that the time needed for channel estimation compared to the feedback and transmission time.}

\subsection{Feedback Load}
The feedback load $L$ is defined as the total number of feedback mini-slots required for the BS to make a user scheduling decision. In other words, $L$ is the total number of measurements required to have robust CS recovery at the BS and it is given by
\[
L_{\text{HD}} =2C\bar{S}\log N/\bar{S}
\]
in the case of HD feedback\footnote{In HD feedback ($J=1$), the relay receives and forwards the users' feedback information using two orthogonal channels. Therefore, the BS requires twice the number of mini-slots for feedback reception when compared to FD feedback.}, and  
\[
L_{\text{FD}}=C\left(J\bar{S}+\bar{S}\log JN/\bar{S}\right),
\] 
in the case FD feedback \cite{baraniuk}. The term $\bar{S}$ represents the average number of users that feedback and it is given by
\begin{equation}
\begin{split}
\bar{S}&=\sum_{n=1}^Nn\binom{N}{n}\left(1-F_{\gamma}\left(\gamma_{th}\right)\right)^nF_{\gamma}\left(\gamma_{th}\right)^{N-n} \\
&=N\left(1-\mathcal{P}_0^{1/N}\right).
\end{split}
\end{equation}

\subsection{Achievable Rate}

The achievable rate for the HD and FD cases can be expressed as
\begin{equation} \label{capacity}
\begin{split}
\mathcal{R} = \mathbb{E} \bigg[ \log\left(1+\gamma_{eq}-\Delta\right)\left(1-\mathcal{P}_o\right)\left(1-Q\left(\frac{\Delta}{\sigma_e}\right)\right)\bigg],
\end{split}
\end{equation}
where 
\begin{equation}
\gamma_{eq}=\min \left(\frac{P_s\left|f\right|^2}{P_r\rho^2+N_0},\gamma_{n^*}\right),
\end{equation}
where $n^*$ is defined in eq. (\ref{SNR}), and $\sigma_e^2=\sigma_{e,LS}^2$ in the case of HD feedback, $\sigma_e^2=\sigma_{e,BLUE}^2$ otherwise. Using the Jensen's inequality, (\ref{capacity}) can be upper bounded as \footnote{Note that the inequality comes also from the fact that we didn't include the CS recovery probability in the rate expression. This probability is not available in closed forms for block-CS recovery.}
\begin{equation} \label{capacity1}
\begin{split}
\mathcal{R} \leq  \log\left(1+\mathbb{E}[\gamma_{eq}]-\Delta\right)\left(1-\mathcal{P}_o\right)\left(1-Q\left(\frac{\Delta}{\sigma_e}\right)\right).
\end{split}
\end{equation}

It is clear from (\ref{capacity1}) that the achievable rate depends on the back-off value $\Delta$. To obtain the optimal value of $\Delta$, we maximize the achievable rate with respect to $\Delta$. To do that, we differentiate the upper bound on $\mathcal{R}$ with respect to $\Delta$ and equate the result to zero. We denote by $\Delta^*$, the optimal back-off, then $\Delta^*$ satisfies
\begin{equation}\label{del}
\begin{split}
&\left(\frac{1+\bar{\gamma}_{eq}-\Delta^*}{\sqrt{2\pi}\sigma_e}\right)\exp\left(-\frac{(\Delta^*)^2}{2\sigma_e^2}\right)\log\left(1+\bar{\gamma}_{eq}-\Delta^*\right) \\
&+Q\left(\frac{\Delta^*}{\sigma_e}\right)=1
\end{split}
\end{equation}
where, $\bar{\gamma}_{eq}=\mathbb{E}\left[\gamma_{eq}\right]$.  Then, the value of $\Delta^*$ that satisfies (\ref{del}) is used in (\ref{capacity}).

It now remains to derive the average end-to-end SNR at the strongest user, i.e. $\mathbb{E}[\gamma_{eq}]$. To achieve this, we first need to derive the PDF of $\gamma_{eq}$, $p_{\gamma_{eq}}\left(x\right)$. Since $\gamma_{eq}$ is simply the minimum of two independent random variables, its CDF can be derived as follows
\begin{equation}
\begin{split}
F_{\gamma_{eq}}\left(x\right)= &1-\left[1-F_{\left|f\right|^2/\left(\rho^2+N_0\right)}\left(x\right)\right]\left[1-F_{\gamma^*}\left(x\right)\right] \\
=& 1-\left[1-\frac{\gamma\left(d,\frac{\lambda x}{\theta}\right)}{\Gamma\left(d\right)}\right]\left[1-\left(1-e^{-\frac{x}{\bar{\gamma}}}\right)^N\right]
\end{split}
\end{equation}
where $\lambda=\rho^2+N_0$.
Therefore, $p_{\gamma_{eq}}\left(x\right)$ is given by
\begin{equation}
\begin{split}
p_{\gamma_{eq}}\left(x\right)=& \frac{\partial F_{\gamma_{eq}}\left(x\right)}{\partial x} \\
=& \underbrace{\frac{N}{\bar{\gamma}\Gamma(d)}e^{-\frac{x}{\bar{\gamma}}}\left(1-e^{-\frac{x}{\bar{\gamma}}}\right)^{N-1}\Gamma\left(d,\frac{\lambda x}{\theta} \right )}_{G_1\left(x\right)} \\
&+\underbrace{\left[1-\left(1-e^{-\frac{x}{\bar{\gamma}}}\right)^N\right]\frac{1}{\Gamma(d)} \left(  \frac{\lambda}{\theta}\right)^d x^{d-1}e^{-\frac{\lambda x}{\theta}}}_{G_2\left(x\right)}
\end{split}
\end{equation}
Hence,
\begin{equation}\label{pdf}
\begin{split}
\bar{\gamma}_{eq}= &\int_0^\infty xp_{\gamma_{eq}}\left(x\right) dx \\
=& \underbrace{\int_0^\infty xG_1\left(x\right) dx}_{\mu_1}+ \underbrace{\int_0^\infty xG_2\left(x\right) dx}_{\mu_2}
\end{split}
\end{equation}
To be able to derive the integrals in (\ref{pdf}), we use the fact that $\left(1-e^{-\frac{x}{\bar{\gamma}}}\right)^N=\sum _{n=0}^N \binom{N}{n}\left( -1\right)^n e^{-\frac{n x}{\bar{\gamma}}}$ and that $\int _0^\infty x^{\mu-1}e^{-\beta x}\Gamma\left( \nu,\alpha x\right )dx=\frac{\alpha^\nu \Gamma(\mu+\nu)}{\mu\left(\alpha+\beta \right )^{\mu+\nu}}\quad _2F_1\left(1,\mu+\nu;\mu+1;\frac{\beta}{\alpha+\beta} \right ) $ \cite{integral}, where $ _2F_1\left(.,.;.;. \right ) $ is the Gaussian hypergeometric function. Therefore, the first integral in (\ref{pdf}) can be evaluated as
\begin{equation}
\begin{split}
\mu_1&=\frac{N\lambda^d}{2\bar{\gamma}d(d+1)\theta^d}\sum_{n=0}^{N-1}\binom{N-1}{n}\left(-1 \right )^n  \left( \frac{\lambda}{\theta}+\frac{n+1}{\bar{\gamma}}\right )^{-d-2}
\\
&\times _2F_1\left(1,d+2;3;\frac{\left(n+1 \right )/\bar{\gamma}}{\lambda/\theta+\left(n+1 \right )/\bar{\gamma}} \right ).
\end{split}
\end{equation}
%where $_2F_1\left(1,m+2;3;\frac{\left(n+1 \right )/\bar{g}}{\lambda/\theta+\left(n+1 \right )/\bar{g}} \right ) =-\frac{2}{a^2 m\left(m+1\right)}\left(\alpha z+\omega\right)\left[-\left(\alpha z+\omega\right)\left(1-\frac{a}{\alpha z+\omega}\right)^{-m}+\omega+m a+\alpha z\right]$, where
%$\alpha=\frac{1}{\theta}$, $a=\frac{n+1}{\bar{g}}$, $c=\frac{N_0}{\theta}$, and $\omega=a+c$.
%{\color{green}This is correct} \\
%Hence,
%\begin{equation}
%I_1=\frac{N}{\bar{g}}\sum_{n=0}^{N-1}\binom{N-1}{n}\frac{\left(-1\right)^{n+1}}{a^2}\left[-1+\left(\omega+m a\right)\left(\alpha z+c\right)^m\left(\alpha z+\omega\right)^{-m-1}+\alpha z\left(\alpha z+c\right)^m\left(\alpha z+\omega\right)^{-m-1}\right]
%\end{equation}
%{\color{green}This is correct} \\
Similarly, the second integral in (\ref{pdf}) can be calculated as
\begin{equation}
\mu_2= d\left(\frac{\lambda}{\theta} \right )^d\sum _{n=1}^N\binom{N}{n}\left( -1\right )^{n+1}\left(\frac{\lambda}{\theta}+\frac{n}{\bar{\gamma}} \right )^{-d-1}
\end{equation}
Finally,
\begin{equation}\label{receiveSNR}
\bar{\gamma}_{eq}=\mu_1+\mu_2
\end{equation}
%Note that the previous derivations of the average equivalent SNR applies for FD transmissions. To obtain the average equivalent SNR for the HD transmissions, we only need to replace $\rho$ by $0$ in (\ref{receiveSNR}).

\subsection{Achievable Throughput}
In the previous section, it was assumed that the amount of air-time reserved to feedback is negligible as compared to the transmission time. This assumption does not give much insight for practical scenarios. In this paper, we define the achievable throughput as the number of transmitted bits per unit time (bps/Hz). The network throughput can be explicitly given by
\begin{equation}
\begin{split}
\mathcal{T}&=\mathcal{R} \frac{\left(T_c-LT_{ms}\right)}{T_c} \\
&=\mathcal{R}\left(1-L\tau\right)
\end{split}
\end{equation}
where $T_c$ is the channel coherence time, $T_{ms}$ is the time needed to transmit one feedback mini-slot and $\tau=\frac{T_{ms}}{T_c}$ is the normalized mini-slot time (see Figure 2).

\section{Numerical Results}\label{nr}

For all simulations, we set $C=2$ and we use the CoSaMP algorithm proposed in \cite{baraniuk}\footnote{CoSaMP  recovery algorithm that we use to recover $\mathbf{v}$ is available at http://dsp.rice.edu. For HD feedback, the group size becomes 1.} for CS recovery. Unless otherwise specified, we set: $P_s=P_r=1$, $b^2=20$ dB, $\sigma^2=0$ dB, $\sigma^2_g=5$ dB, $N_0=-15$ dB, $\mathcal{P}_0=0.01$, and $J=3$.
 Throughout this section, we compare the proposed scheme (scheme. 1) with the following schemes:
\begin{itemize}
\item Noiseless Dedicated Feedback (scheme. 2): All users feedback in independent noise-free, interference-free feedback channels (via the relay).
\item Random User Selection (scheme. 3): For each transmission period, a user is randomly selected and allocated the downlink channel resource.
\item Proposed without back-off (scheme. 4): The same as scheme. 1), but without any back-off (SNR refinement) strategy.
\end{itemize}

\begin{figure}[t!]
        \centering
    \includegraphics[width=3.5in]{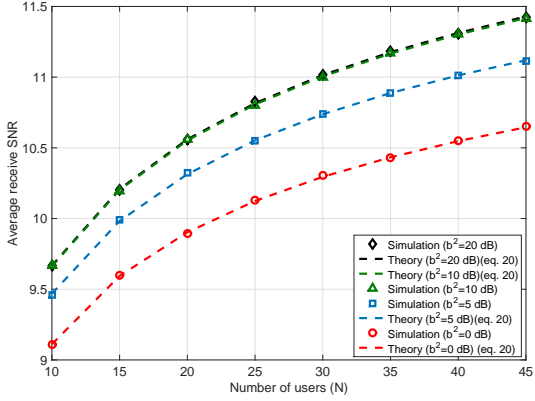}
\caption{Average Receive SNR versus the number of users ($N$), $\sigma^2=0$ dB, $\sigma^2_g=5$ dB, $N_0=-15$ dB}
\label{mmm}
\end{figure}
\begin{figure}[t!]
        \centering
   \includegraphics[width=3.5in]{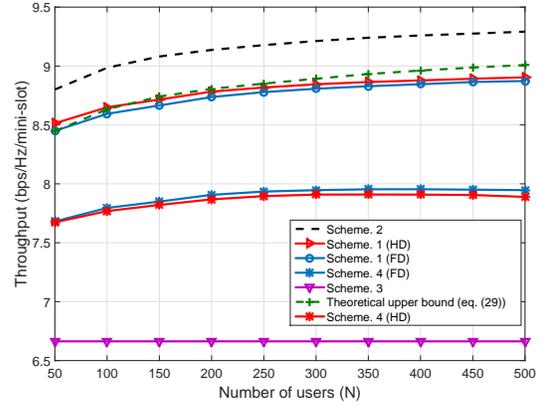}
\caption{Average Rate versus the number of users (N), $b^2=20$ dB, $\sigma^2=0$ dB, $\sigma^2_g=5$ dB, and $\mathcal{P}_0=10^{-2}$.}
\label{fig3}
\end{figure}

\begin{figure}[t!]
        \centering
    \includegraphics[width=3.5in]{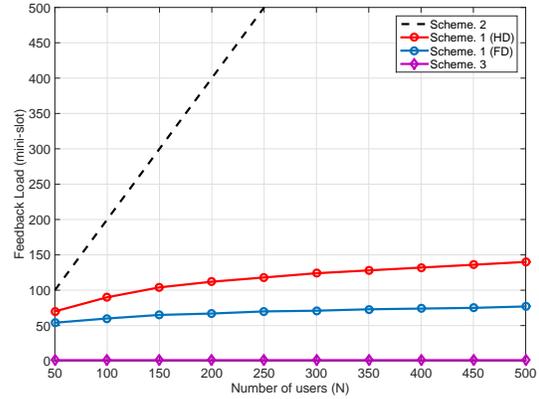}
\caption{Average Feedback Load versus the number of users (N)}
\label{fig4}
\end{figure}

\begin{figure}[t!]
        \centering
    \includegraphics[width=3.5in]{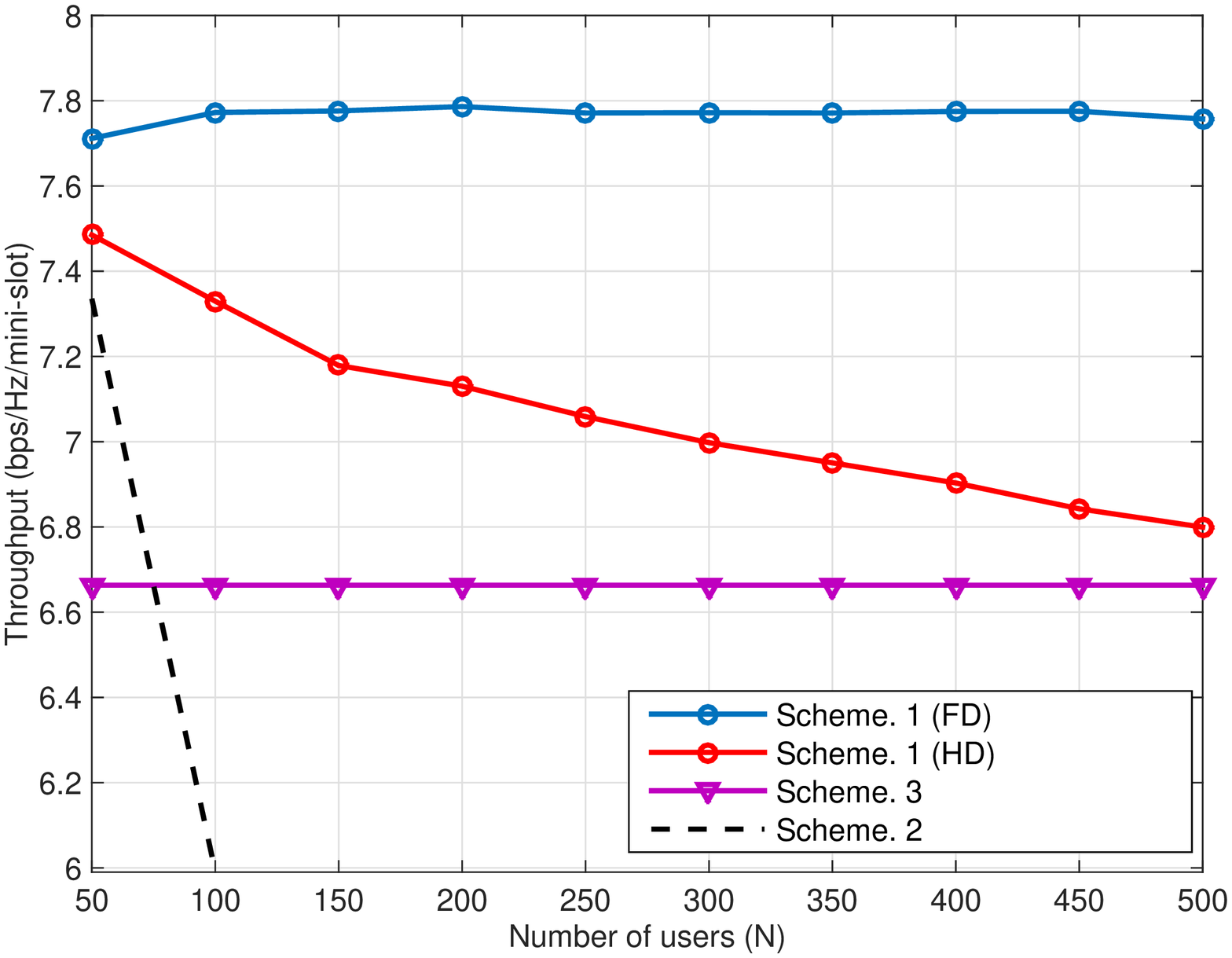}
\label{fig5}
\caption{Average Throughput versus the number of users (N), $\tau=1/600$}
\end{figure}

\begin{figure}[t!]
        \centering
    \includegraphics[width=3.5in]{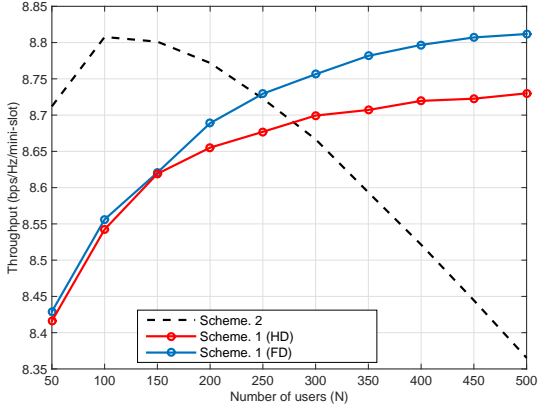}
\caption{Average Throughput versus the number of users (N), $\tau=10^{-4}$}
\label{fig6}
\end{figure}

In Figure \ref{mmm}, we compare the theoretical result obtained in eq. (\ref{receiveSNR}) with the numerical (simulated) average SNR. For all values of the average LOS component $b^2$, the theoretical and the numerical results exactly match, which validates the formula in eq. (\ref{receiveSNR}). It is also clear from Figure \ref{mmm} that increasing $b^2$ increases the average receive SNR as well. However, we notice that the average receive SNR converges for $b^2 \geq 10$ dB, this can be explained by the fact that the receive SNR is constrained by the SNR of the second hop.
In Figure \ref{fig3}, it is clear in one hand that scheme. 2 offers the best performance in terms of average rate, which is expected since perfect channel state information (CSI) is available at the BS. In the other hand, both versions of the scheme. 1 (HD and FD feedback) offer a comparable performance with a gap of 0.42 bps/Hz as compared to scheme. 2, where the HD version is slightly better than the FD version as it is free from interference.  Figure \ref{fig3} also shows that the performance of the proposed scheme without back-of (scheme. 4) suffers a rate-hit as compared to scheme. 1 and scheme. 2, and hence, the BS has to perform a back-off strategy in order to account for the feedback noise. Random user selection (scheme. 3) is shown to yield the worst performance since no feedback is communicated to the BS.

In Figure \ref{fig4}, we evaluate the performance in terms of feedback load, where we assume that the amount of time needed for channel estimation is negligible compared to the feedback air-time. As shown in Figure \ref{fig4}, scheme. 1 consumes much less feedback as compared to scheme. 2 especially at high number of users $N$. This is expected, since scheme. 2 requires the feedback to grow linearly with $N$, whereas scheme. 1 requires the feedback to grow logarithmically with $N$. The feedback load of scheme. 1 with HD feedback is worse than that  of scheme. 1 with FD feedback since feedback is performed over the two hops. 

To have more insight, we simulate the network throughput for $\tau=1/600$ in Figure 6. As shown, scheme. 1 offers the best throughput, and scheme. 2 yields the worst throughput since it requires a large amount of feedback. The throughput of the HD version of scheme. 1 is shown to deteriorate with the number of users. The reason for this is that HD feedback requires more feedback time as the number of users grow when compared to FD feedback. In Figure \ref{fig6}, we decrease $\tau$ to be $\tau=10^{-4}$ (i.e. higher coherence time). We first notice that at low number of users ($N < 250$), scheme. 2 performs better than scheme. 1 (due to the reduced value of $\tau$) and the HD version of scheme. 1 performs slightly lower than scheme. 1 (FD). At high number of users, scheme. 1 (FD) performs the best. Again this is due to its lower feedback requirements.

\section{Conclusion}\label{conclusion}
In this paper, we proposed an opportunistic CS-based feedback algorithm for a relay-aided multiuser network. Instead of allocating a feedback channel for each user, all users are allocated a pool of shared channels for feedback transmissions. We derived the end-to-end receive SNR at the destination user and the post-CS detection/refinement error variance. Moreover, we considered both HD feedback and FD feedback and showed that FD feedback offers comparable rate performance when compared to HD feedback. In addition to this, FD feedback results in lower feedback load, especially at high number of users, and therefore, offers better throughput than HD feedback.

\section*{Appendix A}
\section*{Proof of lemma 1}
 Let $\mathbf{H}=\mathbf{B}_\mathcal{I}^t\mathbf{\Sigma}_{\mathbf{\hat{z}}}^{-1}\mathbf{B}_\mathcal{I}$. Matrix ${\bf H}$ is of finite size $JS$. It thus suffices to study the convergence of the elements of ${\mathbf{H}}$.
Using results on the convergence of quadratic forms \cite[Lemma5,Lemma4]{wagner}, we can establish  that the off-diagonal entries of ${\bf H}$ given by $$\left(\mathbf{a}_{c,n}^{j}\right)^t\mathbf{\Sigma}_{\mathbf{\hat{z}}}^{-1}\mathbf{a}_{c,k}^{l},\forall (n,j)\neq (k,l)$$ converge almost surely to zero, while the diagonal entries converge almost surely to their mean, i.e, 
$$
\left(\mathbf{a}_{c,n}^{j}\right)^t\mathbf{\Sigma}_{\mathbf{\hat{z}}}^{-1}\mathbf{a}_{c,n}^{j}-d_{j,n}\xrightarrow[M\to+\infty]{a.s.}0.
$$
with $d_{j,n}\triangleq\mathbb{E} \left(\mathbf{a}_{c,n}^{j}\right)^t\mathbf{\Sigma}_{\mathbf{\hat{z}}}^{-1}\mathbf{a}_{c,n}^{j}$ reading as:
$$
d_{j,n}=\frac{1}{M}\left(tr\left(\mathbf{\Sigma}_{\mathbf{\hat{z}}}^{-1}\right)-\sum_{k=1}^j \left[\mathbf{\Sigma}_{\mathbf{\hat{z}}}^{-1}\right]_{k,k}\right)
$$
In the expression of $d_{j,n}$, the dominant term is $\frac{1}{M}tr \left(\mathbf{\Sigma}_{\mathbf{\hat{z}}}^{-1}\right)$ while the second term converges almost surely to zero, since
$$
\frac{1}{M}\sum_{k=1}^j \left[\mathbf{\Sigma}_{\mathbf{\hat{z}}}^{-1}\right]_{k,k} \leq \frac{J\|\mathbf{\Sigma}_{\mathbf{\hat{z}}}^{-1}\|}{M}\leq \frac{J}{M}\lim\sup \|\mathbf{\Sigma}_{\mathbf{\hat{z}}}^{-1}\|
$$
We therefore obtain:
$$
\mathbf{H}^{-1}-\frac{M {\bf I}_{JS}}{tr\left(\mathbf{\Sigma}_{\mathbf{\hat{z}}}^{-1} \right)} \xrightarrow[M\to+\infty]{a.s.}0,
$$

In Figure \ref{fig7}, we plot the exact and almost sure limit (lemma 1) of the noise variance versus $M/\bar{S}$. From the figure we notice that the lower bound on $\sigma_{e,BLUE}^2$ is tight for moderate to high ratio of $M/\bar{S}$.  This concludes the proof of the lemma.

\begin{figure}[t!]
        \centering
    \includegraphics[width=3.5in]{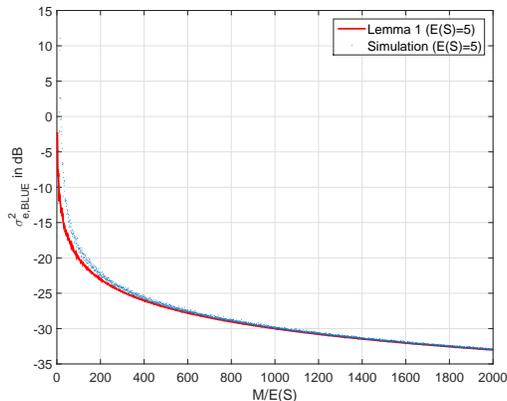}
\caption{BLUE error variance: Exact versus theoretical limit (Lemma 1).}
\label{fig7}
\end{figure}

{}
%\bibliographystyle{ieeebib}
%
%\bibliography{ref}

\begin{thebibliography}{1}

%\bibitem{relayintro7}
% Nokia, E-UTRA link adaption: consideration on MIMO, 3GPP LTE Std. R1-051 415, 2005.
% \bibitem{relayintro6}
% V. Genc, S. Murphy, Y. Yu, and J. Murphy, ``IEEE 802.16J relay-based
%wireless access networks: an overview,'' \emph{IEEE Trans. Wireless Commun.},
%vol. 15, no. 5, pp. 56-63, Oct. 2008.

\bibitem{paulraj}
Fan Sun, Tae Min Kim, Paulraj, AJ. and De Carvalho, E.
and Popovski, P., ``Cell-Edge Multi-User Relaying with Overhearing'', \emph{IEEE Communications Letters} vol. 17, no. 6, pp. 1160-1163, June. 2013.
\bibitem{h1}
V. Chandrasekar, J. Andrews, and A. Gatherer, ``Femto cell networks:
A survey,'' IEEE Commun. Mag., vol. 46, pp. 59-67, Sept. 2008.

\bibitem{h2}
 A. Damnjanovic, J. Montozo, Y. Wei, T. Ji, T. Luo, M. Vajapeyam,
T. Yoo, O. Song, and D. Malladi, ``A survey on 3GPP Heterogeneous
networks,'' IEEE Wireless Commun., vol. 18, pp. 10-21, June 2011.


\bibitem{andrea}
A. Goldsmith, \emph{Wireless Communications}. Cambridge University Press, 2005.
%I. S. Gradshteyn,  and I. M. Ryzhik, \emph{Table of integrals, series, and products,} 7th ed. Elsevier, 2007.
%\bibitem{DFmin}
%T.~Kwon, S.~Lim, S.~Choi, and D.~Hong, ``Optimal duplex mode for DF relay in terms of outage probability'', \emph{IEEE Trans. Veh. Technol.,} vol. 59, no. 7, pp. 3628-3634, Sept. 2010.




\bibitem{rl2}
G. Kramer, M. Gastpar, and P. Gupta, ``Cooperative strategies and
capacity theorems for relay networks,'' \emph{IEEE Trans. Inf. Theory}, vol. 51,
no. 9, pp. 3037-3063, Sept. 2005.
%
%\bibitem{mee}
%M. Eltayeb, and Y. Al-harthi, ``Multiuser diversity with Binary Feedback,'' \emph{Wireless Personal Comm. J.}, vol. 58, no. 4, pp. 669-680 June 2011.




\bibitem{relay11a}
Y. Jing, and H. Jafarkhani, ``Single and multiple relay selection schemes and their achievable diversity orders,'' \emph{IEEE Trans. Wireless Commun.}, vol. 8, no. 3, pp. 1414-1423, Mar. 2009.


%\bibitem{relay1}
% V. Shah, N. Mehta, and R. Yim, ``The relay selection and transmission
%tradeoff in cooperative communication systems,'' \emph{IEEE Trans. Wireless
%Commun.}, vol. 9, pp. 2505-2515, Aug. 2010.
%

%\bibitem{Hasna}
%M.~Hasna, M.-S.~Alouini, ``End-to-end performance of transmission systems with relays over Rayleigh fading channels,'' \emph{IEEE Trans. Wireless Commun,} vol. 2, no. 6, pp. 1126-1131, Nov. 2003.

%\bibitem{alouini2}
%K. Tourki and M.-S. Alouini, ``Toward distributed relay selection for opportunistic amplify-and-forward transmission,'' in \emph{IEEE Veh. Tech. Conf.}, pp. 1-5,  May 2011.



%\bibitem{FDCDF}
%T.~Kwon, S.~Lim, S.~Choi, and D.~Hong, ``Optimal duplex mode for DF relay in terms of outage probability,'' \emph{IEEE Trans. Veh. Technol.,} vol. 59, no. 7, pp. 3628-3634, Sept. 2010.

\bibitem{FDAF}
I. Krikidis, H. Suraweera, and Chau Yuen, ``Amplify-and-forward with full-duplex relay selection," \emph{IEEE Tran.  Wireless Commun.}, vol. 11, pp. 4381-4393, Dec. 2012.



%\bibitem{rl4}
%A. Sendonaris, E. Erkip, and B. Aazhang, ``User cooperation diversity.
%part i. system description,'' \emph{IEEE Trans. Commun.}, vol. 51, no. 11, pp. 1927-1938, Nov. 2003.



\bibitem{six} 
B. Rankov and A. Wittneben, ``Spectral efficient protocols for half-duplex fading relay channels,” IEEE J. on Selecte Areas in Commun., vol. 25, no. 2, pp. 379–389, 2007.


\bibitem{jindal}
N. Jindal, ``MIMO broadcast channels with finite-rate feedback,''\emph{IEEE Trans. Inf. Theory}, vol. 52, no. 11, Nov. 2006.



\bibitem{ammimo}
S. Catreux, V. Erceg, D. Gesbert, and R. Heath,  ``Adaptive Modulation and MIMO Coding for Broadband Wireless Data Networks,'' \emph{IEEE Commun. Mag.}, vol. 40, pp. 108-115, June 2002.
%
%\bibitem{imperfectCSI}
%Z. Peng, W. Xu, C. Zhao and B. Jiang, ``Achievable Rate Analysis and Feedback Design for
%Multiuser Relay with Imperfect CSI,'' \emph{IEEE ICC, Wireless Communications Symposium}, 2012.
%
%\bibitem{csd} D. Donoho, ``Compressed Sensing,'' \emph{IEEE Trans. Inf. Theory,} vol. 52, no. 4, pp. 1289-1306, Apr. 2006.

%\bibitem{gpsr}
%M.  Figueiredo, R. Nowak, and S.  Wright, ``Gradient
%Projection for Sparse Reconstruction: Application to Compressed Sensing
%and Other Inverse Problems,'' \emph{IEEE J. of Selected Topics in Signal
%Process.: Special Issue on Convex Optimization Methods for Signal
%Process.}, vol. 1, no. 4, pp. 586-598, 2007.



\bibitem{cP08}
E. Candes and Y. Plan, ``Near-Ideal Model Selection by $l_1$ Minimization," \emph{Ann. Statist.}, vol. 37, no. 5A, pp. 2145-2177, 2009.
%
\bibitem{spr}
A. Fletcher, S. Rangan, and V. Goyal, ``Necessary and Sufficient Conditions for Sparsity Pattern Recovery,'' \emph{IEEE Trans. Inform. Theory}, vol. 55, no. 12, pp. 5758-5772, Dec. 2009.
%
%\bibitem{mw}
%M.  Wainwright, ``Sharp Thresholds for High-dimensional and Noisy Sparsity Recovery Using $l_1$-Constrained Quadratic Programming,'' \emph{Technical Report}, Department of Statistics, UC Berkeley, 2006.
\bibitem{8}
M. Stojnic, F. Parvaresh, and B. Hassibi, '' On the reconstruction of block-sparse signals with an optimal number of measurements,'' \emph{IEEE Trans. Signal Processing}, vol. 57, no. 8, pp. 3075-3085, Aug. 2009.

\bibitem{43}
J. Tropp, A. C. Gilbert, and M. J. Strauss, '' Algorithms for simultaneous sparse approximation. Part I: Greedy pursuit,'' \emph{Signal Processing,} vol. 86, no. 3, pp.572-588, Apr. 2006.

\bibitem{9}
Y. Eldar, and M. Mishali, '' Robust recovery of signals from a structured union of subspaces,'' \emph{IEEE Trans. Inf. Theory,}, vol. 55, no. 11, pp. 5302-5316, Nov. 2009.
\bibitem{cs12}
S. Bhaskaran, L. Davis, A. Grant, S. Hanly, and P. Tune, ``Downlink Scheduling using Compressed Sensing,'' in \emph{Inf. Theory Workshop}, pp. 201-205, June 2009.


\bibitem{me}
M. Eltayeb, T. Al-Naffouri, and Hamid Bahrami, ''Compressive Sensing for Feedback Reduction in MIMO Broadcast Channels'' \emph{IEEE Transactions on Communications,} vol. 62, no. 9, pp. 3209-3222, Sept. 2014.


\bibitem{cs13}
 L. Davis, S. Hanly, and S. Bhaskaran, ``Channel Estimation and User Selection in the MIMO Broadcast Channel,'' \emph{Digital Signal Process. J.}, vol. 21 issue 5, Sept. 2011.

 \bibitem{cs15}
 W. Lu, Y. Liu, and D. Wang, ``Efficient Feedback Scheme Based on Compressed Sensing in MIMO Wireless Networks,'' \emph{Elsevier Computers \& Electrical Engineering}, available online Mar. 2012.


 \bibitem{het1}

A. Damnjanovic, J. Montojo, W. Yongbin. J. Tingfang, L. Tao. M. Vajapeyam, Y. Taesang, S. Osok, D. Malladi, ''A survey on 3GPP heterogeneous networks,'' \emph{IEEE Wireless Commun.} vol.18, no.3, pp.10,21, June 2011.

 \bibitem{het2}
  J. G. Andrews et al., ''Femtocells: Past, Present, and Future,'' \emph{IEEE J. Selected Areas of Commun.}, Apr. 2012.







\bibitem{baraniuk}
R.~Baraniuk, V.~Cevher, M.~Duarte and C.~Hegde, `` Model-Based Compressive Sensing'' \emph{IEEE TRANSACTIONS ON INFORMATION THEORY,} VOL. 56, NO. 4, April 2010.

\bibitem{Sayed}
A. Sayed, \textit{Fundamentals of Adaptive Filtering}, John Wiley \& Sons, 2003.
%\bibitem{giusi}
%Giuseppa Alfano, Antonia M. Tulino, Angel lozano and Sergio Verdu, ``Capacity of MIMO Channels with
%One-sided Correlation'', \emph{ISSSTA}, Sydney, Australia, Sep. 2004.
%\bibitem{perlman}
%Michael D. Perlman, ``Jensen's inequality for a convex vector-valued function on an infinite-dimensional space'' \emph{Journal of Multivariate Analysis} 4, 52-65, 1974.


%
%\bibitem{FDCDF}
%T.~Kwon, S.~Lim, S.~Choi, and D.~Hong, ``Optimal duplex mode for DF relay in terms of outage probability,'' \emph{IEEE Trans. Veh. Technol.,} vol. 59, no. 7, pp. 3628-3634, Sept. 2010.



\bibitem{e1}
A. Hazmi, J. Rinne, and M. Renfors, ''Cancellation of loop interference
with exponential profile using autocorrelation method in OFDM based
systems," in Proc. 9th IEEE International Conf. Commun. Syst., Sep.
2004, pp. 140-144.


%\bibitem{e2}
%H. Sakai, T. Oka, and K. Hayashi, ``A simple adaptive filter method for
%cancellation of coupling wave in OFDM signals at SFN relay station,"
%in Proc. 14th European Signal Process. Conf., Sep. 2006.

%
%\bibitem{e3}
%K. M. Nasr, J. P. Cosmas, M. Bard, and J. Gledhill, ``Performance of an
%echo canceller and channel estimator for on-channel repeaters in DVB T/H networks," IEEE Trans. Broadcast., vol. 53, no. 3, pp. 609-618, Sep. 2007.

%\bibitem{e4}
%R. N. Braithwaite and S. Carichner, ``Adaptive echo cancellation for an
%on-frequency RF repeater using a weighted power spectrum," in Proc.
%10th European Conf. Wireless Technol., Oct. 2007, pp. 82-85.

\bibitem{e5}
H. Hamazumi, K. Imamura, N. Iai, K. Shibuya, and M. Sasaki, ``A
study of a loop interference canceller for the relay stations in an SFN
for digital terrestrial broadcasting," in Proc. IEEE Global Telecommun.
Conf., vol. 1, Nov. 2000, pp. 167-171.

%\bibitem{e6}
% A. Hazmi, J. Rinne, and M. Renfors, ``A new algorithm for loop
%interference cancellation in SFN for digital terrestrial broadcasting," in
%Proc. 7th International OFDM-Workshop, Sep. 2002.

\bibitem{icassp}
G. Coluccia, A. Roumy, and E. Magli, ''Exact Performance Analysis of the Oracle Receiver for Compressed Sensing Reconstruction,'' in \emph{IEEE Int, Conf, on Acoustic, Speech and Signal Processing,} May 2014, pp. 1005-1009.

\bibitem{wagner}
S.~Wagner, R.~Couillet, M.~Debbah and D. T. M. Slock, `` Large System Analysis of Linear Precoding in Correlated MISO Broadcast Channels Under Limited Feedback'' \emph{IEEE TRANSACTIONS ON INFORMATION THEORY,} VOL. 58, NO. 7, July 2012.

\bibitem{integral}
I.  Gradshteyn and I. Ryzhik, \emph{Table of Integrals, Series, and Products}, 7th ed., Academic Press, 2007.
%\bibitem{icassp}
%G. Coluccia, A. Roumy, E. Magli, ``Exact Performance Analysis Of The Oracle Receiver For Compressed Sensing Reconstruction '', \emph{IEEE International Conference on Acoustic, Speech and Signal Processing (ICASSP)}, 2014.

\end{thebibliography}

\end{document}